\documentclass[a4paper]{article}

\usepackage[document]{ragged2e}

\usepackage[top=1.5in,left=1.5in,right= 1in, bottom = 1.5 in]{geometry}

\usepackage{amsmath,amssymb}
\usepackage{amsfonts}
\usepackage{amsthm}
\usepackage{epstopdf}

\usepackage{enumerate}
\usepackage{yfonts}
\usepackage{mathrsfs}
\usepackage{graphicx}

\usepackage{multirow}
\usepackage[table,xcdraw]{xcolor}


\newtheorem{remark}{Remark}
\newtheorem{definition}{Definition}
\newtheorem{prop}{Proposition}

\usepackage{algorithm}
\usepackage{algorithmic}

\usepackage{changepage}

\usepackage[utf8x]{inputenc}

\usepackage{textcomp,marvosym}

\usepackage{cite}

\usepackage{nameref,hyperref}

\usepackage[right]{lineno}

\usepackage{microtype}
\DisableLigatures[f]{encoding = *, family = * }

\usepackage[table]{xcolor}

\usepackage{array}

\newcolumntype{+}{!{\vrule width 2pt}}

\newlength\savedwidth

\newcommand\thickhline{\noalign{\global\savedwidth\arrayrulewidth\global\arrayrulewidth 2pt}%
\hline
\noalign{\global\arrayrulewidth\savedwidth}}


\raggedright
\setlength{\parindent}{0.5cm}
\textwidth 5.25in 
\textheight 8.75in

\usepackage[aboveskip=1pt,labelfont=bf,labelsep=period,justification=raggedright,singlelinecheck=off]{caption}

\usepackage{scrtime}

\title{Robust policy evaluation from large-scale observational studies}

\author{Md Saiful Islam \thanks{Department of Mechanical $\&$ Industrial Engineering, Northeastern University, Boston, MA 02115, USA 
 }
 \and Md Sarowar Morshed \footnotemark[1]
\and Gary J. Young \footnote{Center for Health Policy and Healthcare Research, Northeastern University, Boston, Massachusetts, USA}
\footnote{D'Amore-McKim School of Business, Northeastern University, Boston, Massachusetts, USA}
\thanks{Bouv\`{e} College of Health Sciences, Northeastern University, Boston, Massachusetts, USA
  }
\and Md. Noor-E-Alam \footnotemark[1] \footnotemark[2] \thanks{Corresponding Author: mnalam@neu.edu}}

\begin{document}
\maketitle

\justify

\section*{Abstract}

Under current policy decision making paradigm, we make or evaluate a policy decision by intervening different socio-economic parameters and analyzing the impact of those interventions. This process involves identifying the causal relation between interventions and outcomes. Matching method is one of the popular techniques to identify such causal relations. However, in one-to-one matching, when a treatment or control unit has multiple pair assignment options with similar match quality, different matching algorithms often assign different pairs. Since, all the matching algorithms assign pair without considering the outcomes, it is possible that with same data and same hypothesis, different experimenters can make different conclusions. This problem becomes more prominent in case of large-scale observational studies. Recently, a robust approach is proposed to tackle the uncertainty which uses discrete optimization techniques to explore all possible assignments. Though optimization techniques are very efficient in its own way, they are not scalable to big data. In this work, we consider causal inference testing with binary outcomes and propose computationally efficient algorithms that are scalable to large-scale observational studies. By leveraging the structure of the optimization model, we propose a robustness condition which further reduces the computational burden. We validate the efficiency of the proposed algorithms by testing the causal relation between Hospital Readmission Reduction Program (HRRP) and readmission to different hospital (non-index readmission) on the State of California Patient Discharge Database from 2010 to 2014. Our result shows that HRRP has a causal relation with the increase in non-index readmission and the proposed algorithms proved to be highly scalable in testing causal relations from large-scale observational studies. 




\textit{\textbf{Key words:}} Causal Inference,  Policy Evaluation,  Matching Method,  Robustness,  Public policy, Observational Study, Big Data.

\maketitle

%


\section*{Introduction} \label{intro}

Effective and evidence-based public policy decisions aim to manipulate one or many socio-economic variables and analyze their impact on the desired outcomes~\cite{nssah2006propensity}. The impact assessment is not associational but causal~\cite{nssah2006propensity,pearl2009causal} which requires an understanding of the counterfactual\textemdash the difference in outcomes with or without the presence of the policy~\cite{kleinberg2015prediction}. This is also true for any post policy evaluation~\cite{nssah2006propensity}. A policy maker may design multiple policies and calculate the causal quantities including the effect of the proposed policies on different recipient groups, effect over time, possible trade-offs between competing goals, and finally, choose the optimal policy~\cite{zajonc2012essays}. The gold standard approach for calculating those causal quantities is conducting a randomized experiment~\cite{rubin1974estimating,austin2011introduction,rosenbaum1983central,athey2017state}. In a randomized experiment, the experimenter can assign an observation to either a treatment or control group randomly; this randomness can avoid  bias and eliminate confounding effects of covariates and thus can achieve unbiased estimation of treatment effects. In this case, a possible association between treatment and outcome will imply causation. However, many studies in health care, social science, economics, and epidemiology cannot be designed as a randomized experiment due to legal or ethical reasons; randomization can also be impractical, time consuming, or very expensive. Hence, those experiments are performed on data that is collected as a natural process. Such experiments are called observational studies (also refereed as natural experiments or quasi-experiments)~\cite{rosenbaum2002observational} and can be implemented in a prospective (collecting sample data as natural observation over time) or retrospective (experimenting on already collected data) way. 

Making causal inference from an observational study lacks the experimental elements of randomization on all possible background covariates (the observed and unobserved characteristics of a sample unit)~\cite{stuart2010matching,stroup2000meta} and prone to bias, and systematic confounding on covariates. However, with proper understanding of the underlying process and careful control of non-randomized data, it is possible to make reasonable estimation of the causal effect~\cite{rubin1974estimating}. Researchers have been utilizing matching methods for identifying causality since the 1940s~\cite{stuart2010matching} and it is one of the most popular methods. It was used or referred in as many as 486,000 academic articles involving causal inference (see \nameref{s:1}). Matching methods examine the possibility of restoring or replicating properties of randomization based on the observed covariates~\cite{stuart2010matching}. In fact, matching attempts to retrieve the latent randomization within the observational data~\cite{hansen2004full}. Being true to it's name, matching methods aim to find a control group which is identical to the treatment group in terms of joint distribution of the observed covariates. As discussed by Stuart~\cite{stuart2010matching}, and Zubizarreta~\cite{zubizarreta2012using}, matching the empirical distribution of the covariates has several significant advantages. For example, matching forces the experimenter to closely examine the data, checking the common support on the covariates, and make the experimenter aware on the quality of inference. Even though the matching process can be complex, the outcome analysis is often done with simple methods~\cite{rosenbaum1985constructing}. For instance, the Rubin Causal Model (also known as Potential Outcome Framework) estimates the causal effect as the difference of expected outcomes between the control group and the treatment group~\cite{holland1986stat}. Due to it's simplistic architecture and many other attractive properties (see~\cite{stuart2010matching,zubizarreta2012using,morgan2006matching}), matching has been used to make policy decision or policy evaluation in health care~\cite{christakis2003health,akematsu2012measuring,kiil2012does,sari2015effects}, education~\cite{zubizarreta2017optimal,hong2006evaluating}, economics~\cite{dehejia1999causal}, law~\cite{epstein2005supreme}, and politics~\cite{herron2007assessing}. 

In this paper, we adopt a robust methodology recently proposed by Morucci \textit{et al.}~\cite{morucci2018hypothesis} and extend it to accommodate causal inference from big data observational studies. We show the efficiency of the proposed methods by evaluating the impact of Hospital Readmission Reduction Program (HRRP)~\cite{mcilvennan2015hospital} policy on Non-index readmission\textemdash readmission to a hospital that is different from the hospital which discharged the patient.

\subsection*{Motivation and contribution}
\subsubsection*{Motivation}

The objective of current one-to-one matching paradigm under potential outcome framework is to find pairs $(t,c)$ between samples $t$ from treatment group $\mathscr{T}$ and $c$ from control group $\mathscr{C}$. A pair $(t,c)$ is assigned in such a way that $t$ and $c$ are exactly same or similar on a specific, pre-determined set of covariates $\mathbf{X}$: $\left \{ (t,c): t \simeq c \vert \mathbf{X}; t \in \mathscr{T} \textrm{ and } c \in \mathscr{C} \right \}$. Over the years, researchers developed a wide array of algorithms to find such pairs, for example, Propensity Score matching~\cite{rosenbaum1985constructing}, Mahalanobis Distance matching~\cite{rosenbaum1985constructing}, Nearest Neighbour Greedy matching~\cite{austin2009some}, Coarsed Exact Matching~\cite{iacus2009cem}, and Genetic matching~\cite{diamond2013genetic} are among the most popular algorithms. All these algorithms (including those not listed here) disregard the outcomes $(Y^1_t, Y^0_c)$ of corresponding pairs $(t,c)$ in the assignment process. Though the matching process reduces bias in treatment effect estimation, disregarding the outcomes in the assignment process introduces a new source of uncertainty. If a sample $t\in \mathscr{T}$ has multiple possible pair assignments $\left \{ c_1, c_2, \cdots, c_n \right  \} \in \mathscr{C}$ and have similar covariate balance but different outcomes (i.e., $Y_t^1 - Y_{c_1}^0 \neq  Y_t^1 - Y_{c_2}^0 \neq \cdots \neq Y_t^1 - Y_{c_n}^0$), by assigning pairs without considering the outcomes, an experimenter can estimate multiple degrees of causal effect (one for each possible assignment). Similarly, a sample from control group $c\in \mathscr{C}$ can have multiple possible assignment options $\left \{ t_1, t_2, \cdots, t_n \right  \} \in \mathscr{T}$. A possible scenario is presented in Fig~\ref{fig1} where within each circle we have multiple pair assignment options with almost similar match quality but different outcomes (outcomes are presented as the size of the data points). In that case, different experimenters using different matching algorithms can get different pairs, hence, their causal effect estimates and conclusions on the experiment can be different. It is possible that two researchers having the exact same hypothesis and using the exact same data but, two different matching algorithms can achieve completely opposite results due to this uncertainty. This problem is exacerbating for studies involving big data as we may have more pair assignment options. Therefore, making policy decisions, in health care or any other field, using matching method that disregards uncertainty due to pair assignment can make a harmful impact on the society.
\begin{figure}[h!]
    \centering
    \caption{\textbf{Uncertainty due to multiple pair assignment options.} Shapes and Colors represent the treatment status and variations in size represent the difference in outcomes.}
    \includegraphics[scale = 0.5]{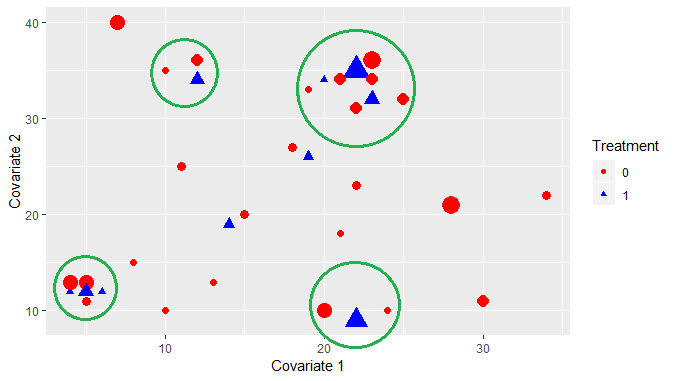}
    \label{fig1}
\end{figure}

On the other hand, in 2012, congress adopted HRRP under Patient Protection and Affordable Care Act (PPACA)~\cite{mcilvennan2015hospital} to increase quality of care and reduce hospital readmission rate. HRRP penalizes the hospital which discharged the patient (index hospital) if the patient with some specific conditions (i.e., Pneumonia, Acute Mayocardial Infraction (AMI), Congestive Heart Failure (CHF)) is readmitted within 30 days of discharge. The index hospital is always penalized even if the patient is readmitted to different hospital (non-index hospital)~\cite{mcilvennan2015hospital}. Though readmission to the index hospital is following a decreasing trend over post HRRP periods, non-index readmissions are increasing~\cite{chen2018reducing}. This increase in non-index readmission rate\textemdash approximately one fifth of all readmissions for medicare patients~\cite{chen2018reducing,chen2017hospital}\textemdash creates suspicion that hospitals are possibly discouraging patients for readmission to avoid penalties introduced by HRRP. Moreover, a recent study identified that non-index readmissions are associated with higher odds of in-hospital mortality and longer length of stay~\cite{burke2018influence}. Therefore, we aim to identify whether HRRP has a causal relation to the increase in non-index readmissions. Finding such causal relation involves analyzing large volume of health care data and matching method would be vulnerable to the uncertainty discussed above. The robust method proposed in~\cite{morucci2018hypothesis} to handle such uncertainty requires solving multiple Integer Programming (IP) models (a minimization and a maximization problem) iteratively. Using state-of-the-art integer programming solvers to solve those IP models for big data observational studies will be computationally expensive.

\subsubsection*{Contribution}
In this work, we extend the robust causal inference testing method proposed by Morucci \textit{et al.}~\cite{morucci2018hypothesis} to handle large scale observational studies with binary outcomes. To handle big data, first, we propose a robustness condition which identifies when a robust solution is possible and combines the maximization and minimization problems into a single problem. Second, we propose an efficient algorithm to calculate the test statistics for the robust condition. In addition, we propose two algorithms\textemdash one to solve the minimization problem and one to solve the maximization problem\textemdash for any condition which will show the degree of uncertainty for a selected number of matched pair. Finally, we implement the algorithms by testing causal effect of HRRP to non-index readmission using the State of California Patient Discharge Data and compare the computational efficiency with canonical IP solvers.    

\begin{remark}
\label{rem:1}
Please note, by ``Robust" we imply ``Robust to the choice of matching method": if $\mathscr{A}$ represents a set of all possible matching algorithms, a researcher choosing any algorithm $A_i \in \mathscr{A}$ and testing a hypothesis of causal effect will get the same result if she has chosen algorithm $A_{j\neq i} \in \mathscr{A}$. Also, we are considering matching as pre-processing and plan to achieve robust test result from a large-scale observational study for a given set of good matches $\mathscr{M}$ identified by any matching algorithm $A_i \in \mathscr{A}$. 
\end{remark}

\section*{Causal inference with matching method and robust test}
In the Rubin Causal Model, a sample unit $i$ from a set of observations $\left \{ 1, 2, \cdots , n \right \} \in \mathscr{S}$ can have two outcomes or responses. The response $Y_i^T$ is called treatment response when the unit $i$ receives certain treatment ($T=1$) and  control response when unit $i$ does not receive treatment ($T=0$). It is assumed that the treatment assignment to any unit does not interfere with the outcome of other units~\cite{rubin1978bayesian}. This assumption is commonly known as Stable Unit Treatment Value Assumption (SUTVA). Under this assumption, the treatment effect on a sample unit $i \in \mathscr{S}$ is calculated as $TE_i = Y_i^{1}-Y_i^{0}$. However, it is impossible to observe the counterfactual scenario for the same sample~\cite{holland1986stat}. Under a certain treatment regime $T\in \left\{ 0,1 \right  \}$ and identical conditions, we can only observe $Y_i^{T=1}$ or $Y_i^{T=0}$ for sample $i$: $Y_i = T_i Y_i^{1} + (1-T_i) Y_i^{0}$~\cite{rubin1974estimating,holland1986stat}. Therefore, we cannot directly measure the treatment effect $TE$ at an individual level. On the other hand, causal inference literature offers a statistical solution to this fundamental problem by taking expectation over the observation set $\mathscr{S}$, formally called \textit{Average Treatment Effect (ATE)}.

\begin{align}
\label{eq:1}
 ATE = E \left[ Y^{1} - Y^{0}|\mathbf{X}\right]   
\end{align}

The \textit{ATE} as defined in Eq~\ref{eq:1} provides the opportunity to divide $\mathscr{S}$ into the treatment group $\mathscr{T}$ when $T=1$ and control group $\mathscr{C}$ when $T=0$ such that $(\mathscr{T} \cup \mathscr{C}) = \mathscr{S}$  and work with their expectations. So, we can construct the \textit{ATE} as $E[Y^{1}|T=1] - E[Y^{0}|T=0]$ but, this form of \textit{ATE} implicitly assumes that the potential responses are independent of treatment assignment: $Y_i^1, Y_i^0 \perp T, \forall i \in \mathscr{S} $. Though this independence assumption holds in randomized experiments, in general, it does not hold for observational studies as the experimenter rarely has control on the treatment assignment process. This problem is solved by making an assumption known as Strong Ignorability~\cite{rosenbaum1983central}. Let $\mathbf{X} \in \mathcal{X}$ and $\mathbf{X} \in \mathbb{R}^{k}$ be the set of pre-treatment background variables (covariates) which characterizes the observations. The strong ignorability assumption states that the potential responses are independent of treatment assignment when conditioned on the covariates: $Y_i^1, Y_i^0 \perp T|\mathbf{X} $ and every unit $i \in \mathscr{S}$ has a positive probability to receiving treatment: $0 < Pr (T=1|\mathbf{X}=\mathbf{x}) < 1$. Another commonly used estimate of causal effect is \textit{Average Treatment Effect on Treated (ATT)} which is defined under slightly relaxed assumption ($Y_i^0 \perp T|\mathbf{X}$).

\begin{align}
\label{eq:2}
ATT = E [ (Y^{1} - Y^{0})|\mathbf{X}, T=1]    
\end{align}

Both of these estimates are prone to bias as the treatment assignment process is not random. In matching method,  an unbiased estimate of causal inference can be achieved if treatment unit $t \in \mathscr{T}$ is exactly matched with a control unit $c \in \mathscr{C}$ in terms of their covariate set $\mathbf{X} \in \mathcal{X}$~\cite{rosenbaum1983central}. However, in most of the applications, it is impossible to achieve exact matching~\cite{zubizarreta2012using,rosenbaum1983central,nikolaev2013balance,king2016propensity}. A wide variety of matching methods are employed to make $(t,c)$ pairs as similar as possible~\cite{zubizarreta2012using,zubizarreta2015stable,rosenbaum1983central} or finding a subset of control group samples $\mathcal{C}\subset \mathscr{C}$ which is similar to the treatment group samples $\mathcal{T} \subset\mathscr{T}$ in the joint distribution of the covariate set $\mathbf{X}$~\cite{nikolaev2013balance,iacus2009cem}. In this work, we consider one-to-one matching which aims to find a pair $(t,c) \subset (\mathscr{T}, \mathscr{C})$ that is matched (either exactly or by some user defined balance function) on a set of covariates $\mathbf{X} \subset \mathcal{X}$.

Before explaining the difference between classical method of causal inference~\cite{rubin1974estimating, rosenbaum1985constructing,holland1986stat} and robust causal inference testing approach~\cite{morucci2018hypothesis}, let us define the set of good match $\mathscr{M}$ and the pair assignment variables $a_{i,j}$.
\begin{definition}
\label{def:1}
\textup{(A set of Good Match)} A set of good match $\mathscr{M}$ includes treatment group samples $\mathcal{T} \subset \mathscr{T}$ and control group samples $\mathcal{C} \subset \mathscr{C}$ that satisfies certain covariate balance criteria defined under matching algorithm $A_i \in \mathscr{A}$.

\begin{align*}
\mathscr{M} := \left \{(t,c) \in (\mathcal{T} \times \mathcal{C}) : t \simeq c \vert \mathbf{X}    \right \}
\end{align*}
\end{definition}

\begin{definition}
\label{def:2}
\textup{(Pair Assignment Operator)} The Pair Assignment Operator is a binary assignment variable $a_{ij} \in \left \{ 0,1 \right \}$ where $a_{ij} = 1$ if sample $t_i \in \mathcal{T}$ is paired with a sample $c_j \in \mathcal{C}$ and the pair $(t_i,c_j) \in \mathscr{M}$; $a_{ij} = 0$ otherwise. 
\end{definition}
For a given set of possible matches $\mathscr{M}$, we can perform hypothesis test in the following form with the null hypothesis being no causal effect and alternative being the opposite.
\begin{equation}\label{eq:3}
    \mathbf{H}_0^{ATE} : \mathbb{E}[ Y^1-Y^0 \vert \mathbf{X} ] = 0
\end{equation}
\begin{equation}\label{eq:4}
    \mathbf{H}_0^{ATT} : \mathbb{E}[ Y^1-Y^0 \vert \mathbf{X}, T=1 ]= 0
\end{equation}
Under the classical approach of matching method, we can test these hypotheses first by defining a test statistic $\Lambda$, specifying an imbalance measure along with a tolerance limit on the imbalance. Then, we apply a matching algorithm $A_i \in \mathscr{A}$ to find the set of good match $\mathscr{M}$ which satisfy the imbalance limit, otherwise we tune the allowable imbalance limit to generate $\mathscr{M}$. Robust approach differs from the classical approach moving forward from here (see Fig~\ref{fig2}). The classical approach picks one (out of many) possible combination of pairs from $\mathscr{M}$ and conducts the hypothesis test wherein, the robust approach calculate the maximum and minimum value of the test statistic $(\Lambda_{max}, \Lambda_{min})$ and corresponding p-values to explore all possible assignment combinations within $\mathscr{M}$ which does not increase imbalance under Definition~\ref{def:1}. The test will be robust if both $\Lambda_{max}$ and $\Lambda_{min}$ produce same conclusion on the hypothesis. We formally define the Robust Test in Definition~\ref{def:3}.
\begin{figure}[!ht]
\centering
    \caption{\textbf{Comparison of matching for hypothesis testing under classical approach and robust approach~\cite{morucci2018hypothesis}.} Steps before covariate balance achievement remains same for each approach. In the remaining steps, Black arrows show the classical approach, and Blue arrows show the robust approach proposed in~\cite{morucci2018hypothesis}. }
   \includegraphics[width=1\textwidth]{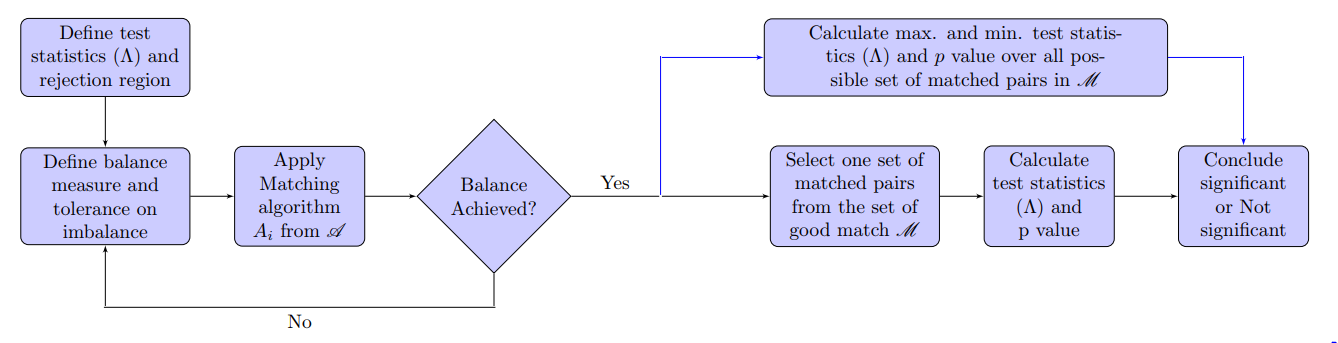}
    \label{fig2}
\end{figure}

\begin{definition}
\label{def:3}
\textup{(Robust Test)} Let $\alpha$ be the level of significance set for the hypothesis $\mathbb{H}_0$ and $(\Lambda_{max}, \Lambda_{min})$ are the test statistics calculated from $\mathscr{M}$, then, testing $\mathbf{H}_0$ is called $\alpha$-robust if max(\textup{p-value($\Lambda_{max}$), p-value($\Lambda_{min}$)}$)\leq \alpha$ or min(\textup{p-value($\Lambda_{max}$), p-value($\Lambda_{min}$)}$)> \alpha$. Testing $\mathbf{H}_0$ is called absolute-robust when p-value($\Lambda_{min}$) $=$ p-value($\Lambda_{max}$).
\end{definition}

Calculating the test statistic $\Lambda$ generates an integer programming model which is computationally expensive for large scale data. In the following section, we propose a robustness condition following the Robust Test definition which will allow us to calculate a $\Lambda_{robust} = \Lambda_{min} = \Lambda_{max}$ for absolute-robust test and we can avoid solving two integer programming problems. In this work, we are interested in testing the hypothesis stated in Eqs (\ref{eq:3},\ref{eq:4}) for binary outcomes: $Y\in \left \{ 0,1 \right \}$ with the McNemar's test~\cite{mcnemar1947note} as proposed in~\cite{morucci2018hypothesis}. From now on, we use the terms ``robust" and ``absolute-robust" interchangeably and when we use robust, we are implying to the absolute-robust test defined in Definition~\ref{def:3}.

\section*{Robust McNemar's test }
McNemar's test is the ideal candidate for testing hypothesis in Eqs (\ref{eq:3},\ref{eq:4}) as it deals with on-to-one matched pairs. It operates on a $2 \times 2$ contingency table (see Table~\ref{tab:conf}) and the test statistics under null hypothesis assumes that the marginal proportions are homogeneous. Among the four types of matched pairs, we are mainly interested in the discordant pairs $B = \sum_{i\in \mathscr{T}} \sum_{j \in \mathscr{C}} a_{ij}Y^0_j (1-Y^1_i)$ and $C = \sum_{i\in \mathscr{T}} \sum_{j \in \mathscr{C}} a_{ij}Y^1_i (1-Y^0_j)$ where $a_{i,j}$ is the pair assignment operator defined in Definition~\ref{def:2}. Here, $B$ counts the number of pairs where treatment units has outcomes 0: $Y^1 = 0$ and control units has outcomes 1: $Y^0 = 1$ and $C$ counts the discordant pairs where $Y^1 = 1$ and $Y^0 = 0$. Under the assumption of having at least 1 discordant pair: $B+C \geq 1$ we will use the test statistic $\Lambda$ as defined in Eq (\ref{eq:5}) to test both hypotheses.
\begin{equation}\label{eq:5}
    \Lambda = \frac{B-C-1}{\sqrt{B+C}}
\end{equation}

\begin{table}
\centering
\caption{\textbf{Contingency table of the outcomes of treatment and control observations.}}
\label{tab:conf}
\begin{tabular}{cccc}
                          & \multicolumn{3}{c}{\hspace{2 cm} Treatment}                                                                                                                                            \\
                          &                          & Yes $\left(Y^1 =1\right)$                                                                   & No $\left(Y^1 =0\right)$                                                                   \\ \cline{3-4} 
                          & \multicolumn{1}{l|}{Yes $\left(Y^0 =1\right)$} & \multicolumn{1}{l|}{\cellcolor[HTML]{34FF34}{\color[HTML]{000000} \hspace{0.75 cm} A}} & \multicolumn{1}{l|}{\cellcolor[HTML]{9698ED}{\color[HTML]{000000} \hspace{0.75 cm} B}} \\ \cline{3-4} 
\multirow{-2}{*}{Control} & \multicolumn{1}{l|}{No $\left(Y^0 =0\right)$  }  & \multicolumn{1}{l|}{\cellcolor[HTML]{FD6864}{\color[HTML]{000000} \hspace{0.75 cm} C}} & \multicolumn{1}{l|}{\cellcolor[HTML]{34FF34}{\color[HTML]{000000} \hspace{0.75 cm} D}} \\ \cline{3-4} 
\end{tabular}
\end{table}

Morucci \textit{et al.}~\cite{morucci2018hypothesis} proposed the following integer programming model that explores all possible assignment options and calculate maximum and minimum possible test statistics $\Lambda_{max}$ and $\Lambda_{min}$, respectively.

$$\textrm{Maximize/Minimize}_a \quad \Lambda(\textbf{a}) = \frac{B-C-1}{\sqrt{B+C}}$$

Subject to:
\begin{align}
& \sum_{i\in \mathscr{T}} \sum_{j \in \mathscr{C}} a_{ij}Y^0_j (1-Y^1_i) = B \\
& \sum_{i\in \mathscr{T}} \sum_{j \in \mathscr{C}} a_{ij}Y^1_i (1-Y^0_j) = C \\
& B+C = m \quad \textrm{(Total number of discordant pairs)} \\
& \sum_{i \in \mathscr{T}} a_{ij} \leq 1 \quad \forall j \quad \textrm{(Choose at most one treatment observation)} \\
& \sum_{j \in \mathscr{C}} a_{ij} \leq 1 \quad \forall i \quad \textrm{(Choose at most one control observation)}
\end{align}
$$\textrm{Additional user-defined covariate balance constraints to find }  \mathscr{M} $$
\begin{align}
a_{ij} \in \left \{0,1\right \}   
\end{align}

As it is shown in Table~\ref{tab:conf}, $B$ is the total number of untied responses when $Y^1_i = 0$ is matched with $Y^0_j = 1$. Similarly, $C$ is total number of untied responses when $Y^1_i = 1$ is matched with $Y^0_j = 0$. Therefore, both $B,C \in \mathbb{R}^+$. Under this definition of $B$ and $C$, we provide the following propositions on the objective function of the robust McNemar's test and its optimal values.

\begin{prop}
\label{proposition:1}
The objective function $\Lambda(\textbf{a})$, has the following properties:
\begin{enumerate}[(i)]
\item For any $C > 0$, $\Lambda(\textbf{a})$ is strictly increasing in $B$ for $B \in \mathbb{R}^+$
\item For any $B \geq 0$, $\Lambda(\textbf{a})$ is monotonically decreasing in $C$ for $C\geq 1$ and strictly decreasing for $C > 1$
\end{enumerate}
\end{prop}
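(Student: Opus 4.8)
The plan is to treat $\Lambda$ as a smooth function of the two continuous variables $B$ and $C$ on the region where it is defined, and to settle both monotonicity claims by computing the relevant partial derivatives and reading off their signs. The standing assumption $B+C \geq 1$ guarantees $B+C>0$, so on the admissible region $\Lambda = (B-C-1)(B+C)^{-1/2}$ is differentiable and the factor $(B+C)^{-3/2}$ that will appear is well-defined and strictly positive. Because the proposition explicitly permits $B,C \in \mathbb{R}^+$, I may differentiate directly rather than argue discretely.

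For part (i), I would hold $C$ fixed and differentiate in $B$. Applying the product and chain rules to $(B-C-1)(B+C)^{-1/2}$ and factoring out $\frac{1}{2}(B+C)^{-3/2}$ gives $\frac{\partial \Lambda}{\partial B} = \frac{1}{2}(B+C)^{-3/2}\,(B+3C+1)$. The prefactor is strictly positive on the admissible region, and the bracket $B+3C+1$ is strictly positive for every $B\geq 0$ and $C>0$; hence $\frac{\partial \Lambda}{\partial B}>0$ throughout, which is exactly the asserted strict increase in $B$.

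For part (ii), the symmetric computation holding $B$ fixed yields, after the same factoring, $\frac{\partial \Lambda}{\partial C} = -\frac{1}{2}(B+C)^{-3/2}\,(3B+C-1)$. Now the sign is governed entirely by the linear term $3B+C-1$. For $B\geq 0$ and $C\geq 1$ this term satisfies $3B+C-1\geq 0$, with equality only at the single corner $B=0,\,C=1$, so $\frac{\partial \Lambda}{\partial C}\leq 0$ and $\Lambda$ is monotonically (non-strictly) decreasing; when $C>1$ the term is strictly positive, forcing $\frac{\partial \Lambda}{\partial C}<0$ and hence a strict decrease. This reproduces both clauses of (ii).

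The computation is routine calculus, so I expect no serious obstacle; the only points needing care are confirming that the common factor $(B+C)^{-3/2}$ is positive on the whole admissible region (which is where $B+C\geq 1$ is used) and correctly separating the weak from the strict conclusion in (ii) by isolating the boundary point $B=0,\,C=1$ at which $\frac{\partial\Lambda}{\partial C}$ vanishes. Since $B$ and $C$ ultimately arise as integer-valued sums over the assignment variables $a_{ij}$, the continuous monotonicity established here immediately specializes to the discrete behaviour needed for the maximization/minimization model, so no additional discrete argument is required.
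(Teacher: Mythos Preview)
Your proposal is correct and follows exactly the paper's approach: both compute the partial derivatives $\partial\Lambda/\partial B = (B+3C+1)/\bigl(2(B+C)^{3/2}\bigr)$ and $\partial\Lambda/\partial C = (-3B-C+1)/\bigl(2(B+C)^{3/2}\bigr)$ and read off the signs. Your write-up is in fact slightly more careful than the paper's, since you explicitly isolate the boundary point $B=0,\,C=1$ to distinguish the weak from the strict decrease in (ii), whereas the paper simply records the inequality $\partial\Lambda/\partial C\le 0$.
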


\begin{proof}
Let $C > 0$, then for any $B \in \mathbb{R}^+$, we have
\begin{align}
\frac{\partial \Lambda(\textbf{a})}{\partial B} = \frac{B+3C+1}{2(B+C)^{3/2}} > 0
\end{align}
which implies $\Lambda(\textbf{a})$ is strictly increasing in $B$ for a fixed $C$. Similarly, let $B \geq 0$, then for any $C\geq 1$ we have,
\begin{align}
\frac{\partial \Lambda(\textbf{a})}{\partial C} = \frac{-3B-C+1}{2(B+C)^{3/2}} \leq 0
\end{align}
this proves the claims of Proposition~\ref{proposition:1}.
\end{proof}

Before further discussion on $\Lambda(\textbf{a})$ and the optimality conditions, we introduce following notations and definitions of maximum untied responses, for both $B$ and $C$.
\\
$|Y^1_i = 1|$ is the number of treatment units in the matched set with positive outcome\\
$|Y^1_i = 0|$ is the number of treatment units in the matched set with negative outcome\\
$|Y^0_j = 1|$ is the number of control units in the matched set with positive outcome\\
$|Y^0_j = 0|$ is the number of control units in the matched set with negative outcome

\begin{definition} 
\label{def:4}
\textup{(Maximum type one discordant pair)} $B_{max}$ in the maximum number of possible pairs between $Y^1_i \in \mathscr{T}$ and $Y^0_j \in \mathscr{C}$ where the treated observation has negative (``No") outcome but the untreated (control) observation has positive (``Yes") outcome, i.e., 
$$B_{max} = \min\left \{|Y^0_j = 1|, |Y^1_i = 0| \right \}$$
\end{definition}

\begin{definition} 
\label{def:5}
\textup{(Maximum type two discordant pair)} $C_{max}$ in the maximum number of possible pairs between $Y^1_i \in \mathscr{T}$ and $Y^0_j \in \mathscr{C}$ where the treated observation has positive (``Yes") outcome but the untreated (control) observation has negative (``No") outcome, i.e.,
$$C_{max} = \textrm{min}\left \{|Y^1_i = 1|, |Y^0_j = 0| \right \}$$
\end{definition}

\noindent For a fix value of $m$, the McNemar's test model becomes linear and the objective functions become,
\begin{align}
\Lambda(\textbf{a}) = \frac{1}{\sqrt{m}} (B-C-1)   
\end{align}

Using the property of $\Lambda(\textbf{a})$ explained in Proposition~\ref{proposition:1}, we can find the optimal solution.


\begin{prop}
\label{proposition:2}
Let $C \geq 1$ and denote $m$ as the total number of discordant pairs, then the optimal pair $(C^*,B^*)$ is given by:
\begin{align*}
& \textbf{min:} \quad (C^*,B^*) =\left\{\begin{matrix}
\left(C_{max}, m-C_{max}\right) & \quad \textrm{if} \quad m> C_{max}\\ 
\left(m, 0\right) & \quad \textrm{if} \quad m < C_{max}
\end{matrix}\right. \\
& \textbf{max:} \quad (C^*,B^*) =\left\{\begin{matrix}
\left(m-B_{max}, B_{max}\right) & \quad \textrm{if} \quad m> B_{max}\\ 
\left(0, m \right) & \quad \textrm{if} \quad m < B_{max}
\end{matrix}\right.
\end{align*}
\end{prop}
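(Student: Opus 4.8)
The plan is to exploit the linearity that emerges once $m = B+C$ is held fixed, together with the monotonicity already established in Proposition~\ref{proposition:1}. With $m$ fixed the objective reduces to $\Lambda(\textbf{a}) = \frac{1}{\sqrt{m}}(B-C-1)$, and I would first eliminate $C$ via $C = m-B$ to obtain $\Lambda = \frac{2B-m-1}{\sqrt{m}}$, a strictly increasing affine function of the single variable $B$ (slope $2/\sqrt{m} > 0$). This is consistent with Proposition~\ref{proposition:1}: increasing $\Lambda$ in $B$ and decreasing $\Lambda$ in $C$ point in the same direction along the line $B+C=m$, since raising $B$ forces $C$ down. Hence both the minimization and the maximization collapse to driving $B$ to an endpoint of its feasible range.

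Second, I would pin down that feasible range. The pair-assignment constraints force $0 \le B \le B_{max}$ and $0 \le C \le C_{max}$ by Definitions~\ref{def:4} and~\ref{def:5}, because $B_{max}$ and $C_{max}$ are by construction the largest numbers of type-one and type-two discordant pairs any feasible assignment can realize. Substituting $C = m-B$ converts the bound $0 \le C \le C_{max}$ into $m - C_{max} \le B \le m$, so that $B$ ranges over the interval $[\max\{0,\, m-C_{max}\},\ \min\{B_{max},\, m\}]$.

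Third, I would simply read off the optima from the endpoints. To minimize $\Lambda$ I minimize $B$: the lower endpoint is $m-C_{max}$ when $m > C_{max}$, giving $(C^*,B^*) = (C_{max},\, m-C_{max})$, and is $0$ when $m < C_{max}$, giving $(m,\,0)$. To maximize $\Lambda$ I maximize $B$: the upper endpoint is $B_{max}$ when $m > B_{max}$, giving $(m-B_{max},\, B_{max})$, and is $m$ when $m < B_{max}$, giving $(0,\,m)$. These four cases are exactly the statement.

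The main obstacle I anticipate is not the scalar optimization, which is immediate, but justifying that the endpoints of the $B$-interval are genuinely attainable by an admissible $0$--$1$ assignment $\textbf{a}$ respecting the matching constraints (at most one partner per unit). I would establish attainability constructively: to realize $B = B_{max}$ one greedily pairs treated ``No'' units with control ``Yes'' units until the smaller pool is exhausted, which is feasible since each unit is used at most once and $B_{max}$ is precisely the minimum of the two pool sizes; an analogous pairing realizes $C = C_{max}$, and every intermediate integer value of $B$ is reached by swapping one pair at a time. A secondary point to flag is the mild tension with the standing hypothesis $C \ge 1$ in the maximization subcase $m < B_{max}$, where the claimed optimum has $C^* = 0$: I would note that $C \ge 1$ is invoked only to sign the derivative in Proposition~\ref{proposition:1}, and that once $m \ge 1$ is fixed the objective is affine and strictly increasing in $B$, so the boundary analysis extends without change down to $C^* = 0$.
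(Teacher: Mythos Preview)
Your proposal is correct and follows essentially the same line as the paper's proof: fix $m$, invoke the monotonicity from Proposition~\ref{proposition:1} (you make this explicit by substituting $C=m-B$ to obtain a strictly increasing affine function of $B$), and then push $B$ to the appropriate endpoint of its feasible range. Your treatment is in fact more careful than the paper's, since you explicitly justify attainability of the endpoints via a constructive pairing argument and flag the tension with the standing hypothesis $C\ge 1$ in the subcase $C^*=0$, neither of which the paper's proof addresses.
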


\begin{proof}
From Proposition~\ref{proposition:1}, we know that $\Lambda(\textbf{a})$ is monotonically decreasing in C when $C \geq 1$. Therefore, in the minimization problem, assignment will be made to maximize $C$ until we are about to violate  constraint $B+C =m$. When the total number of discordant pairs is set to $m > C_{max}$, $C$ will take the value of $C_{max}$ and $B$ will take the value of $m-C_{max}$ just to satisfy the total number of discordant pair constraints and the solution will be optimal. If $m < C_{max}$, the new $C= m$ and the minimum value will be achieved at $C=m$ and $B=0$. 
\\
Similarly, from Proposition~\ref{proposition:1}, we know that $\Lambda(\textbf{a})$ is strictly increasing in $B$ for any $B \in \mathbb{R}^+$. So, in the maximization problem, pair assignment will be made to maximize $B$ within the feasible region. When the total number of discordant pair is set to $m > B_{max}$, at optimal solution, $B$ will take the value of $B_{max}$ and $C$ will take the value of $m-B_{max}$ just to stay in the feasible region. When $m < B_{max}$, the $B$ will take the value $m$ and the maximum value will be achieved at $B=m$ and $C=0$.
\end{proof}
 
\begin{prop}
\label{proposition:3}
 For the linear model, an absolute-robust estimate will be achieved if and only if the total number of discordant pair  $m=B_{max} + C_{max}$.
\end{prop}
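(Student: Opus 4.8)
The plan is to reduce the absolute-robust condition of Definition~\ref{def:3} to a uniqueness statement about the feasible values of $B$, and then read that uniqueness off Proposition~\ref{proposition:2}. First I would note that once $m$ is fixed the constraint $B+C=m$ lets me eliminate $C$, so that $\Lambda(\textbf{a}) = \frac{1}{\sqrt{m}}(B-C-1) = \frac{1}{\sqrt{m}}(2B-m-1)$ is a strictly increasing affine function of $B$ alone. Because the McNemar p-value is a monotone function of the statistic $\Lambda$, the defining equality p-value$(\Lambda_{min}) = $ p-value$(\Lambda_{max})$ is equivalent to $\Lambda_{min} = \Lambda_{max}$, and by the strict monotonicity in $B$ this holds exactly when the minimizing and maximizing optimal values of $B$ coincide, i.e.\ when $B$ is forced to a single value on the feasible set.

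Next I would make this explicit using Proposition~\ref{proposition:2} in the regime $m > C_{max}$ and $m > B_{max}$. The minimizing optimum $(C^*,B^*) = (C_{max}, m-C_{max})$ yields $\Lambda_{min} = \frac{1}{\sqrt{m}}(m-2C_{max}-1)$, and the maximizing optimum $(C^*,B^*)=(m-B_{max}, B_{max})$ yields $\Lambda_{max} = \frac{1}{\sqrt{m}}(2B_{max}-m-1)$. Equating the two, the common factor $1/\sqrt{m}$ cancels and the linear relation $m - 2C_{max} - 1 = 2B_{max} - m - 1$ collapses to $2m = 2(B_{max}+C_{max})$, that is $m = B_{max} + C_{max}$. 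This settles both implications at once inside this regime: the two optima agree precisely at that value of $m$.

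To finish I would confirm the regime and dispose of the boundary branches. Every feasible assignment obeys $B \le B_{max}$ and $C \le C_{max}$ (Definitions~\ref{def:4} and~\ref{def:5}), so $m \le B_{max}+C_{max}$ always; at $m = B_{max}+C_{max}$ the two caps must both be saturated, forcing $B=B_{max}$ and $C=C_{max}$ uniquely, which verifies $\Lambda_{min}=\Lambda_{max}$ directly and shows $m=B_{max}+C_{max}$ indeed satisfies $m>B_{max}$ and $m>C_{max}$ whenever $B_{max},C_{max}\ge 1$. For the converse it suffices to observe that the feasible $B$ form the interval $[\max(0,m-C_{max}),\min(B_{max},m)]$, whose width is positive for every $m < B_{max}+C_{max}$ in the non-degenerate setting, giving at least two feasible $B$ and hence $\Lambda_{min} < \Lambda_{max}$; in particular the small-$m$ branches of Proposition~\ref{proposition:2} are exactly these non-robust cases.

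The step I expect to be the main obstacle is the clean treatment of the edge cases in the ``only if'' direction, namely ensuring that no degenerate configuration produces a spurious coincidence $\Lambda_{min}=\Lambda_{max}$ at some $m \neq B_{max}+C_{max}$. The safest route is the interval argument above, under the standing non-degeneracy assumption that at least one discordant pair of each type is available ($B_{max},C_{max}\ge 1$); this pins the feasible $B$ to a single point exactly when $m=B_{max}+C_{max}$ and yields the equivalence without any further case analysis.
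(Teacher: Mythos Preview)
Your argument is correct and it reaches the same conclusion as the paper, but the two proofs are organized quite differently. The paper treats the ``if'' direction exactly as you do (plug $m=B_{max}+C_{max}$ into Proposition~\ref{proposition:2} and observe that both optima give $(B,C)=(B_{max},C_{max})$), but for the ``only if'' direction it performs an explicit six-case exhaustion over the possible orderings of $m$, $B_{max}$, and $C_{max}$ in the range $0<m<B_{max}+C_{max}$, writing down the optimal pair from Proposition~\ref{proposition:2} in each case and checking $\Lambda_{max}\neq\Lambda_{min}$. Your route---rewrite $\Lambda$ as a strictly increasing affine function of $B$ for fixed $m$, identify the feasible $B$ as the integer interval $[\max(0,m-C_{max}),\min(B_{max},m)]$, and note that this interval collapses to a point exactly when $m=B_{max}+C_{max}$---subsumes all six cases in one stroke and makes the non-degeneracy hypothesis $B_{max},C_{max}\ge 1$ transparent. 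The paper's enumeration buys nothing extra; your interval argument is strictly cleaner and also makes it evident why the degenerate cases ($B_{max}=0$ or $C_{max}=0$) need to be set aside.
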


\begin{proof}
\noindent According to the proposed approach to the causal inference estimate, an absolute-robust estimate is achieved when $\Lambda(\textbf{a})_{max}$ and $\Lambda(\textbf{a})_{min}$ is equal. For the McNemar's test model, the model becomes infeasible when $m$ is set to $m > B_{max}+C_{max}$ as we can only have $B_{max}+C_{max}$ number of total untied responses. So feasible range of $m$ is: $0 < m \leq (B_{max}+C_{max})$.

To prove the Proposition~\ref{proposition:3}, we first set $m$ to it's maximum value $B_{max}+C_{max}$. Using Proposition~\ref{proposition:2}, in this case, the optimal solution for the $\Lambda(\textbf{a})_{max}$ problem is: $B = B_{max}$, $C = m-B_{max} = C_{max}$ and the optimal solution for $\Lambda(\textbf{a})_{min}$ problem is: $C = C_{max}$, $B=m-C_{max} = B_{max}$. So, for $m=B_{max}+C_{max}$ case, we get $\Lambda(\textbf{a})_{max} = \Lambda(\textbf{a})_{min}$ and the solution is absolute-robust.  

Conversely, $m$ can take any integer value in the range $0 < m < (B_{max}+C_{max})$ which can lead to the following six cases. For each of the cases, we will find the optimal solution using Proposition~\ref{proposition:2}. 
\begin{itemize}
    \item $0 \leq B_{max} \leq C_{max} \leq m < (B_{max}+C_{max})$: The optimal solution for the minimization problem is $C = C_{max}, B = m - C_{max}$ and the maximization problem is $C = m-B_{max}, B = B_{max}$.
    \item $0 \leq B_{max} < m \leq C_{max} < (B_{max}+C_{max})$: The optimal solution for the minimization problem is $C = m, B = 0$ and the maximization problem is $C = m-B_{max}, B = B_{max}$.
    \item $0 \leq C_{max} \leq B_{max} \leq m < (B_{max}+C_{max})$: The optimal solution for the minimization problem is $C = C_{max}, B = m - C_{max}$ and the maximization problem is $C = m-B_{max}, B = B_{max}$.
    \item $0 \leq C_{max} \leq m \leq B_{max} < (B_{max}+C_{max})$: The optimal solution for the minimization problem is $C = C_{max}, B = m - C_{max}$ and the maximization problem is $C = 0, B = m$.
    \item $0 < m \leq C_{max} \leq B_{max} < (B_{max}+C_{max})$: The optimal solution for the minimization problem is $C = m, B = 0$ and the maximization problem is $C = 0, B = m$.
    \item $0 < m \leq B_{max} \leq C_{max} < (B_{max}+C_{max})$: The optimal solution for the minimization problem is $C = m, B = 0$ and the maximization problem is $C = 0, B = m$.
\end{itemize}
For all of the above six cases, $\Lambda(\textbf{a})_{max} \neq \Lambda(\textbf{a})_{min}$, hence, the solution is not absolute-robust.

\noindent Therefore, the total number of discordant pairs $m$ have to be $B_{max}+C_{max}$ to get an absolute-robust estimate.
\end{proof}

As we can see from the Proposition~\ref{proposition:2}, the optimization problem has become a counting problem and can be solved efficiently for big data. However, the optimal solution under Proposition~\ref{proposition:2} disregards the assignment constraints Eqs (9,10) and addition user-defined constraints. To find the optimal solution using the result from Proposition~\ref{proposition:2} that is feasible, we take a two-step approach. At the first step, we handle the user-defined constraints to find a good set of match $\mathscr{M}$ as a pre-processing step. We can use any off-the-shelf matching algorithm for that purpose or define a separate pair assignment model with different covariate balance measure to find $\mathscr{M}$. At the second step, we partition the set of good match $\mathscr{M}$ into $\mathcal{P}$ partitions such that within a partition $p \in \left \{1,2,\cdots, \mathcal{P} \right \}$, any treatment unit $t$ can be matched with any control unit $c$. A formal definition of a partition is provided below.

\begin{definition}
\label{def:6}
\textup{(Partition of $\mathscr{M}$)} $p \subset \mathscr{M}$ is a partition if any treatment unit $t \in \left \{ 1,2,\cdots, \mathcal{N}^p_t \right \}$ is a good match to any control unit $c \in \left \{ 1,2,\cdots, \mathcal{N}^p_c \right \}$ and $(t,c) \in \mathscr{M}$. Reverse is also true.
\end{definition}

Construction of partitions under Definition~\ref{def:6} ensures that only good matches are considered for assignment. In addition, Definition~\ref{def:4} calculates $B_{max}$ by pairing negative outcomes of treatment units and positive outcomes of control units which inherently satisfies the pair assignment constraints Eqs (9,10). Similarly, we calculate $C_{max}$ by assigning a pair between samples with positive treatment outcomes and negative control outcomes. Therefore, none of the treatment or control unit is used more than once in the pair assignment process which satisfies the pair assignment constraints Eqs (9,10).   

Now, using the above mentioned results, we propose Algorithm~\ref{alg:1} which identifies the robustness condition and corresponding test statistics $\Lambda(\mathbf{a})_{robust}$.

\begin{algorithm} 
\caption{: Test statistic $\Lambda(\mathbf{a})_{robust}$ at robustness condition} 
\label{alg:1} 
\begin{algorithmic} 
    \REQUIRE Vector of outcomes $(Y^1,Y^0)^1, (Y^1,Y^0)^2, \cdots, (Y^1,Y^0)^\mathcal{P}$ 
    \STATE $B_{max} \gets 0$
    \STATE $C_{max} \gets 0$
    \FOR{$p=1:\mathcal{P}$}
        \STATE $B^p \gets min(\vert Y^1 = 0 \vert, \vert Y^0 =1\vert)^p$
        \STATE $C^p \gets min(\vert Y^1 = 1 \vert, \vert Y^0 =0\vert)^p$
        \STATE $B_{max} \gets B_{max} + B^p$
        \STATE $C_{max} \gets C_{max} + C^p$
    \ENDFOR
    \RETURN \[\Lambda(\mathbf{a})_{robust} = \frac{B_{max}-C_{max}-1}{\sqrt{B_{max}+C_{max}}}\]
\end{algorithmic}
\end{algorithm}

By the sketch of the Algorithm~\ref{alg:1}, it seems like we are only matching the discordant pairs and ignoring the other possible pair assignments in the data, which is not true. In Proposition~\ref{proposition:4}, we show that we match maximum possible pairs.  
\begin{prop}
\label{proposition:4}
Algorithm~\ref{alg:1} ensures that the maximum possible pairs $(t,c)$ are matched in $\mathscr{M}$.
\end{prop}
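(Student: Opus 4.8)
The plan is to argue partition by partition. By Definition~\ref{def:6}, a treatment unit lying in partition $p$ is a good match only for control units in the same partition, so matches in $\mathscr{M}$ never cross partition boundaries and the maximum matching over $\mathscr{M}$ is the sum of the maximum matchings over the individual partitions. Within a fixed partition $p$ every treatment unit is a good match for every control unit, so the bipartite compatibility graph is \emph{complete}; its maximum matching therefore has size exactly $\min\{\mathcal{N}^p_t,\mathcal{N}^p_c\}$, and the elementary fact I would lean on is that in a complete bipartite graph every maximal matching (one admitting no further edge between two unmatched vertices) is already maximum. Hence it suffices to show that the discordant assignment produced by Algorithm~\ref{alg:1} can be completed to a matching of size $\min\{\mathcal{N}^p_t,\mathcal{N}^p_c\}$ inside partition $p$.

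First I would record the counts the algorithm uses in partition $p$: following Definitions~\ref{def:4} and~\ref{def:5} it forms $B^p=\min\{\vert Y^1=0\vert^{p},\vert Y^0=1\vert^{p}\}$ type-$B$ pairs (treated negative with control positive) and $C^p=\min\{\vert Y^1=1\vert^{p},\vert Y^0=0\vert^{p}\}$ type-$C$ pairs (treated positive with control negative). These two families draw on disjoint pools of units—$B$-pairs on treated-negatives and control-positives, $C$-pairs on treated-positives and control-negatives—so the two minimizations do not interfere. The key inequality to establish is $B^p+C^p\le\min\{\mathcal{N}^p_t,\mathcal{N}^p_c\}$, which follows at once from the two bounds $B^p+C^p\le\vert Y^1=0\vert^{p}+\vert Y^1=1\vert^{p}=\mathcal{N}^p_t$ and $B^p+C^p\le\vert Y^0=1\vert^{p}+\vert Y^0=0\vert^{p}=\mathcal{N}^p_c$, each obtained by discarding the $\min$ on the appropriate side. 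This guarantees that the leftover counts $\mathcal{N}^p_t-B^p-C^p$ and $\mathcal{N}^p_c-B^p-C^p$ are both nonnegative.

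Next I would complete the matching by pairing leftover treated units with leftover control units arbitrarily (legitimate because the partition is complete), continuing until one side is exhausted; this adds $\min\{\mathcal{N}^p_t-B^p-C^p,\,\mathcal{N}^p_c-B^p-C^p\}$ further pairs. Adding back the discordant pairs, the total matched in partition $p$ is
\[
B^p+C^p+\min\{\mathcal{N}^p_t-B^p-C^p,\,\mathcal{N}^p_c-B^p-C^p\}=\min\{\mathcal{N}^p_t,\mathcal{N}^p_c\},
\]
precisely the maximum matching size for the complete bipartite partition. Summing over $p=1,\dots,\mathcal{P}$ then yields the claim for all of $\mathscr{M}$.

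The main obstacle—and the exact point where the ``it seems like we ignore the other pairs'' worry must be dispelled—is verifying that greedily maximizing the discordant pairs cannot strand units and so cannot block the overall matching. I would resolve this by exploiting that $B^p$ and $C^p$ are taken as \emph{minima}: after the $B$-pairs are formed, at least one of \{leftover treated-negatives, leftover control-positives\} is zero, and after the $C$-pairs at least one of \{leftover treated-positives, leftover control-negatives\} is zero. Consequently no leftover treated unit can form a discordant pair with any leftover control unit, so the completion step uses only concordant ($A$- or $D$-type) pairs and creates no new discordant pairs. This simultaneously shows the completion is well defined and confirms that $B_{max}=\sum_p B^p$ and $C_{max}=\sum_p C^p$ are the genuine discordant counts of an honest maximum matching, which is the content of Proposition~\ref{proposition:4}.
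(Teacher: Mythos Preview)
Your argument is correct and follows essentially the same route as the paper: reduce to a single partition, observe that the maximum matching there is $\min\{\mathcal{N}^p_t,\mathcal{N}^p_c\}$, and show that the discordant assignment $B^p+C^p$ can be completed with concordant pairs to reach that bound. The paper completes explicitly with $A^p_{\max}=\min\{\mathcal{N}^{p+}_t-C^p,\mathcal{N}^{p+}_c-B^p\}$ and $D^p_{\max}=\min\{\mathcal{N}^{p-}_t-B^p,\mathcal{N}^{p-}_c-C^p\}$ and asserts the resulting sum equals $\min\{\mathcal{N}^p_t,\mathcal{N}^p_c\}$ as ``trivial,'' whereas you complete by arbitrary pairing of leftovers and supply the clean identity $B^p+C^p+\min\{\mathcal{N}^p_t-B^p-C^p,\mathcal{N}^p_c-B^p-C^p\}=\min\{\mathcal{N}^p_t,\mathcal{N}^p_c\}$ together with the observation (via the $\min$ structure) that no leftover discordant pair is possible; this makes your version slightly more self-contained but not substantively different.
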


\begin{proof}
To prove this Proposition, we only need to show that in any partition $p$, Algorithm~\ref{alg:1} matches maximum possible pairs. Then, we can sum the maximum pair assignments across the partitions to achieve maximum possible pairs $(t,c)$ assignment in $\mathscr{M}$.
 
Lets consider a partition $p$ where $\mathcal{N}^p_t$ denotes the number of treatment samples and $\mathcal{N}^p_c$ denotes the number of control samples. Hence, the maximum number of pairs we can assign in $p$ is $min(\mathcal{N}^p_t,\mathcal{N}^p_c)$. We will use $\mathcal{N}^{p+}_.$ to represent the number of samples with positive outcomes ($Y^. = 1$) and $\mathcal{N}^{p-}_.$ to represent the number of samples with negative outcomes ($Y^. = 0$). After assigning the discordant pairs ($B^p_{max}$) and ($C^p_{max}$) as we did in Algorithm~\ref{alg:1}, we are left with  $(\mathcal{N}^{p+}_t - C^p_{max}) + (\mathcal{N}^{p-}_t - B^p_{max})$ treatment samples and $(\mathcal{N}^{p+}_c - B^p_{max}) + (\mathcal{N}^{p-}_c - C^p_{max})$ control samples. Now, we can assign the remaining treatment and control samples into the other two types of pairs $A$ and $D$ to their limit: 
\begin{align*}
  & A^p_{max} = min\left((\mathcal{N}^{p+}_t - C^p_{max}), (\mathcal{N}^{p+}_c - B^p_{max})\right) \\
  & D^p_{max} = min\left((\mathcal{N}^{p-}_t - B^p_{max}), (\mathcal{N}^{p-}_c - C^p_{max})\right)
\end{align*}

It is trivial to show that the for partition $p$,
\begin{align*}
 min(\mathcal{N}^p_t,\mathcal{N}^p_c) =B^p_{max} +C^p_{max}+D^p_{max}+A^p_{max}   
\end{align*}
An example of maximum pair assignment is provided in Fig~\ref{fig4} where treatment outcomes ($t$) are sorted in descending order and control outcomes ($c$) are sorted in ascending order. In the left panel, we can have $min(\mathcal{N}^p_t,\mathcal{N}^p_c) = 5$ pairs at maximum. After assigning $B_{max} =2$ and $C_{max} =2$ according to Algorithm~\ref{alg:1}, we can assign only one pair to $D_{max}$ and $A_{max}=0$. Therefore, we achieve the maximum number of pair assignments. We follow the similar procedure in the middle and right panels.
\end{proof}

\begin{figure}[!ht]
\centering
    
    \caption{\textbf{Example of maximum pair assignments between treatment and control group.} $t$ represents the treatment group and $c$ represents the control group. An arrow connects a treatment unit with a control unit which forms a pair.}
    \includegraphics[width=0.9\textwidth]{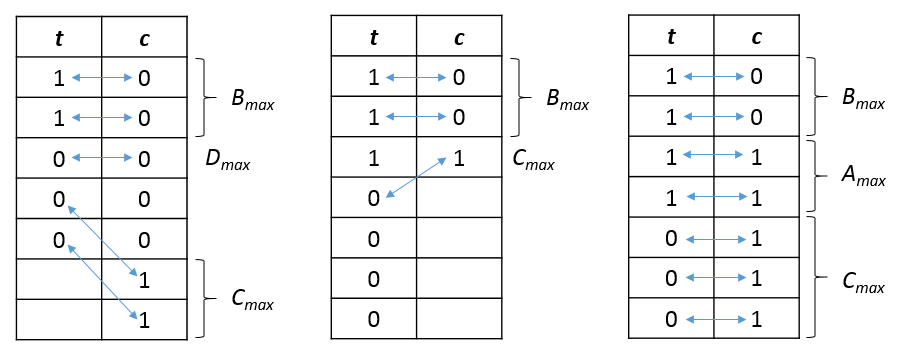}
    \label{fig4}
\end{figure}

In the Algorithm~\ref{alg:1}, we calculate the test statistics for robust condition which an experimenter can use to find the corresponding p-value and compare with a pre-defined level of significance $\alpha$ to make conclusion on the hypothesis of no causal relation. Decisions made in this process will be free of uncertainty and robust to the choice of matching algorithms. If different experimenters perform matching on same data using different matching algorithms but follow the above mentioned procedure, all of their conclusions will be exactly the same. 

Regarding the computational complexity arises due to big data, our proposed algorithm only involves counting elements in vectors and few algebraic operations. The counting processing can be done with the summation of vectors as we are dealing with only binary outcomes: summation implies the total number of positive outcomes and we can calculate the negative outcomes by subtracting it from the size of the vector. In addition, we only need to solve the problem once\textemdash at robustness condition. Therefore, the proposed algorithm will be highly efficient for big data. 

While the Algorithm~\ref{alg:1} directly calculates test statistics at robustness condition, a researcher might be interested in exploring the degree of uncertainty in the causal inference test. She may want to see how the uncertainty changes towards the robust estimate with respect to the number of discordant pairs matched. For this purpose, we propose the following two algorithms (\ref{alg:2}, \ref{alg:3}) following the result of Proposition~\ref{proposition:2}.

\begin{algorithm} 
\caption{: Maximizing the test statistics $\Lambda(\mathbf{a})$} 
\label{alg:2} 
\begin{algorithmic} 
    \REQUIRE Vector of outcomes $(Y^1,Y^0)^1, \cdots, (Y^1,Y^0)^\mathcal{P}$ and increment in $m$: $I$
    \STATE $B, C \gets 0$
    \WHILE{$m \leq B_{max} + C_{max}$}
        \FOR{$p=1:\mathcal{P}$}
            \IF{$m < B_{max}$}
                \STATE $C^p, B^p \gets 0,m$
            \ELSE
                \STATE $C^p, B^p \gets m-B_{max}, B_{max}$
            \ENDIF
            \STATE $B \gets B+B^p$
            \STATE $C \gets C+C^p$
            \IF{$(B+C) \geq m$}
                \STATE break
            \ENDIF
        \ENDFOR
        \STATE $m \gets m + I$
    \ENDWHILE
    \RETURN \[\Lambda(\mathbf{a})_{max} = \frac{B-C-1}{\sqrt{B+C}}\]
\end{algorithmic}
\end{algorithm}

\begin{algorithm} 
\caption{: Minimizing the test statistics $\Lambda(\mathbf{a})$} 
\label{alg:3} 
\begin{algorithmic} 
    \REQUIRE Vector of outcomes $(Y^1,Y^0)^1, \cdots, (Y^1,Y^0)^\mathcal{P}$ and increment in $m$: $I$
    \STATE $B, C \gets 0$
    \WHILE{$m \leq B_{max} + C_{max}$}
        \FOR{$p=1:\mathcal{P}$}
            \IF{$m < B_{max}$}
                \STATE $C^p, B^p \gets m,0$
            \ELSE
                \STATE $C^p, B^p \gets C_{max}, m-C_{max}$
            \ENDIF
            \STATE $B \gets B+B^p$
            \STATE $C \gets C+C^p$
            \IF{$(B+C) \geq m$}
                \STATE break
            \ENDIF
        \ENDFOR
        \STATE $m \gets m + I$
    \ENDWHILE
    \RETURN \[\Lambda(\mathbf{a})_{min} = \frac{B-C-1}{\sqrt{B+C}}\]
\end{algorithmic}
\end{algorithm}

\section*{Numerical experiment}
In this section, we present the efficiency of the proposed algorithms with data from the State of California Patient Discharge Database and answer an interesting hypothesis on the effectiveness of HRRP implemented in October 1, 2012. 

Readmission rate is considered an important measure of hospital or care quality. To increase the care quality and hold hospitals accountable, US Congress introduces HRRP under PPACA in 2012~\cite{mcilvennan2015hospital}. The most important feature of this program is that the index hospital (the hospital that discharged the patient) is penalized if patients with Pneumonia, Congestive Heart Failure (CHF), and Acute Mayocardial Infraction (AMI) are readmitted (to the index hospital or any other hospital) within 30 days of discharge. During the post HRRP period, the overall rate of readmission is decreasing which the proponents of HRRP are attributing as the success of the policy. However, in this period, readmission to different hospitals (non-index readmission) has been increasing~\cite{chen2018reducing,chen2017hospital}. Non-index readmissions are considered as a discontinuity of care and found to be associated with longer lengths of stay and higher in-hospital mortality rates. There is a chance that hospitals are possibly discouraging patients for readmission to avoid penalties introduced by HRRP. To check the cause behind the increase in non-index readmission post HRRP periods, we make the following hypothesis and test it with the proposed algorithms with the level of significance $\alpha =0.05$. 

\textit{H\textsubscript{0}: HRRP has no causal relation with the increase in non-index readmission}

\textit{H\textsubscript{1}: HRRP has a positive causal relation with the increase in non-index readmission}

\subsection*{Data description and covariate balance}
In this research, we primarily used the patient discharge data between 2010 to 2014 from California. We collected this nonpublic data from the California Office of Statewide Health Planning and Development (OSHPD) which collects in-patient data from California licensed hospitals. Each patient in this data set is presented with a unique identifier which is used to determine if a patient is readmitted in the future. In addition, the data set also contains patient level information such as, ICD-9 codes for clinical diagnosis, comorbidity index, age, gender, discharge destination, patients' zip code, and insurance information. The patients are tracked with unique identifiers to find out if they were readmitted within 30 days of discharge and when a readmission is found, we identify the destination hospital of that readmission. Then, a binary variable is created with 0 if the patient is readmitted to the same hospital or 1 if different hospital. To test the hypothesis, we will use this variable as our outcome: $Y = 1$ if readmitted to a different hospital or $Y=0$ if readmitted to the same hospital.   

Moreover, the OSHPD data set is merged with publicly available data from Centers for Medicare and Medicaid Services, American Association Annual Hospital Survey and the Area resource file. From these additional data sources, we get important hospital level information like, teaching status (membership status of the Council of Teaching Hospitals), ownership type (public, non-profit, investor owned), hospital size based on number of beds (small: below 100 beds, medium: 101 to 399 beds, and large: 400 and above beds), hospital locations (rural, metro). We also include patient household incomes which we consider as the median income of patients' residence zip codes. We divide the data into two sets: before and after October 1, 2012, the implementation date of HRRP. The treatment here is the implementation of HRRP, all the readmissions between February 1, 2010 and September 30, 2012 is considered as control group $\mathscr{C}$ (treatment $T=0$) and all the readmission from October 1, 2012 to November 30, 2014 is considered as treatment group $\mathscr{T}$ (treatment $T=1$). To capture any potential readmission within 30 days period of an index discharge, admissions before February 1, 2010 and beyond November 30, 2014 are excluded. A descriptive view of readmitted patients' characteristics is presented in Table~\ref{Table:1}.

\begin{table}
\small
\begin{adjustwidth}{-1.3 in}{0in} 
\caption{\textbf{Characteristics of readmitted patients in the State of California Patient Discharge Database from 2010 to 2014.}}
\label{Table:1}
\begin{tabular}{lccccc}
\thickhline
\textbf{Variable}   & \textbf{All Readmission} & \textbf{Index Hospital} & \textbf{Non-index Hospital} & \textbf{Before HRRP}   & \textbf{After HRRP}    \\ \thickhline
\textbf{Readmitted Patients}         & 90553           & 67341             & 23212          & 53353  & 37200  \\ \thickhline
\textbf{Demographic Characteristics} &                 &                   &                       &               &               \\ \hline
\textbf{Age}                         &                 &                   &                       &               &               \\ \hline
\quad 0-20                        & 635 (0.70)      & 505 (0.75)        & 130 (0.56)            & 427 (0.8)     & 208 (0.56)    \\
\quad 21-30                       & 1073 (1.18)     & 717 (1.06)        & 356 (1.53)            & 566 (1.06)    & 507 (1.36)    \\
\quad 31-40                       & 2186 (2.41)     & 1471 (2.18)       & 715 (3.08)            & 1269 (2.38)   & 917 (2.47)    \\
\quad 41-50                       & 6336 (7.00)     & 4196 (6.23)       & 2140 (9.22)           & 3714 (6.96)   & 2622 (7.05)   \\
\quad 51-65                       & 21018 (23.21)   & 14470 (21.49)     & 6548 (28.21)          & 11950 (22.4)  & 9068 (24.38)  \\
\quad 65 and above                & 59305 (65.49)   & 45982 (68.28)     & 13323 (57.4)          & 35427 (66.4)  & 23878 (64.19) \\ \hline
\textbf{Gender}                     &                 &                   &                       &               &               \\ \hline
\quad Female                      & 45124 (49.80)   & 34240 (50.80)     & 10884 (46.9)          & 27049 (59.94) & 18075 (40.06) \\
\quad Male                        & 45429 (50.20)   & 33101 (49.20)     & 12328 (53.1)          & 26304 (57.9)  & 19125 (42.1)  \\ \hline
\textbf{Household Income}           &                 &                   &                       &               &               \\ \hline
\quad Quartile 1                  & 22428 (24.77)   & 15640 (23.23)     & 6788 (29.24)          & 13108 (24.57) & 9320 (25.05)  \\
\quad Quartile 2                  & 22629 (24.99)   & 16633 (24.70)     & 5996 (25.83)          & 13331 (24.99) & 9298 (24.99)  \\
\quad Quartile 3                  & 22450 (24.79)   & 17160 (25.48)     & 5290 (22.79)          & 13165 (24.68) & 9285 (24.96)  \\
\quad Quartile 4                  & 23046 (25.45)   & 17908 (26.59)     & 5138 (22.14)          & 13749 (25.77) & 9297 (24.99)  \\ \hline
\textbf{Clinical Characteristics}    &                 &                   &                       &               &               \\ \hline
\textbf{Primary Diagnosis}           &                 &                   &                       &               &               \\ \hline
\quad CHF                         & 50151 (55.40)   & 37404 (55.50)     & 12747 (54.9)          & 29351 (55.01) & 20800 (55.91) \\
\quad AMI                         & 11917 (13.20)   & 8148 (12.10)      & 3769 (16.2)           & 6865 (12.87)  & 5052 (13.58)  \\
\quad Pneumonia                   & 28485 (31.40)   & 21789 (32.40)     & 6696 (28.9)           & 17137 (32.12) & 11348 (30.51) \\ \hline
\textbf{Charlson Comorbidity Index}  &                 &                   &                       &               &               \\ \hline
\quad Low (0-2)                   & 35394 (39.09)   & 25884 (38.44)     & 9510 (40.97)          & 21282 (39.89) & 14112 (37.94) \\
\quad Medium (3-6)                & 51301 (56.65)   & 38454 (57.10)     & 12847 (55.35)         & 29850 (55.95) & 21451 (57.66) \\
\quad Medium High (7-10)          & 3396 (3.75      & 2628 (3.90)       & 768 (3.31)            & 1944 (3.64)   & 1452 (3.9)    \\
\quad High (10 and above)         & 462 (0.51)      & 375 (0.56)        & 87 (0.37)             & 277 (0.52)    & 185 (0.5)     \\ \hline
\textbf{Hospital Characteristics}    &                 &                   &                       &               &               \\ \hline
\textbf{Teaching Status}             &                 &                   &                       &               &               \\ \hline
\quad Teaching Hospital          & 10261 (11.30)   & 7706 (11.40)      & 2555 (11)             & 5882 (11.02)  & 4379 (11.77)  \\
\quad Non-teaching Hospital       & 80272 (88.70)   & 59635 (88.50)     & 20657 (89)            & 47471 (88.98) & 32821 (88.23) \\ \hline
\textbf{Ownership Type}              &                 &                   &                       &               &               \\ \hline
\quad Non-profit Hospital         & 58592 (64.70)   & 45210 (67.10)     & 13382 (57.6)          & 34252 (64.2)  & 24340 (65.43) \\
\quad Investor Hospital           & 17902 (19.80)   & 11389 (16.90)     & 6513 (28.1)           & 10839 (20.32) & 7063 (18.99)  \\
\quad Public Hospital             & 14059 (15.50)   & 10742 (16.00)     & 3317 (14.3)           & 8262 (15.49)  & 5797 (15.58)  \\ \hline
\textbf{Hospital Size}               &                 &                   &                       &               &               \\ \hline
\quad Small (below 100 beds)      & 4982 (5.50)     & 3453 (5.10)       & 1529 (6.6)            & 2979 (5.58)   & 2003 (5.38)   \\
\quad Medium (100-399 beds)       & 61167 (67.60)   & 45149 (67.10)     & 16018 (69)            & 36314 (68.06) & 24853 (66.81) \\
\quad Large (400 and above beds)  & 24404 (26.90)   & 18739 (27.80)     & 5665 (24.4)           & 14060 (26.35) & 10344 (27.81)\\ \hline
\textbf{Hospital Location}               &                 &                   &                       &               &               \\ \hline
\quad Rural      & 2426 (2.68)     & 1809 (2.69)       & 617 (2.66)            & 1348 (2.53)   & 1078 (2.90)   \\
\quad Metro       & 88127 (97.32)   & 65532 (97.31)     & 22595 (97.34)            & 52005 (97.47) & 36122 (97.10) \\\thickhline
\end{tabular}
 \justify{The entries in each cell is presented as Number of patients ``N (\%)" form. From February 1, 2010 to September 30, 2012 is considered ``Before HRRP" period. From October 1, 2012 to December 31, 2014 is considered ``After HRRP" period. CHF-Congestive Heart Failure, AMI-Acute Mayocardial Infraction.} 

\label{Tab:1}
\end{adjustwidth}
\end{table}

We match the patients based on the following covariates: Age, Gender, Primary diagnosis, Household Income, Charlson Comorbidity Index, Hospital Location, Hospitals' Teaching Status, Hospitals' Ownership Status, and Hospital Size. We divide the covariates into two groups: 1) discrete and 2) continuous. The discrete covariates (i.e., Gender, Primary Diagnosis, Hospital Location, Hospitals' Teaching Status, Hospitals' Ownership Status, and Hospital Size)  are matched exactly. The continuous covariates (i.e., Age, Household Income, and Charlson Comorbidity Index) are first, divided into categories as shown in Table~\ref{Table:1}, then, the categories are matched exactly. This matching strategy resulted in 1822 partitions of data; within a partition, any treatment sample can be matched with any control sample. Though this matching approach seems ad hoc in nature, it is very similar to the well known method called Coarsed Exact matching (CEM)~\cite{iacus2009cem} with 1822 bins. Traditionally, CEM is implemented with much lesser number bins due to the lack of common support between treatment and control groups but a higher number of bins make finer covariate balance~\cite{iacus2012causal,iacus2011multivariate} which is the objective of any matching method. However, to implement the proposed algorithms, an experimenter is not limited to CEM or the matching method we used. Given a good set of matchec created under Definition~\ref{def:1}, we can always create the partitions under Definition~\ref{def:6}.

\subsection*{Experiment and result}
To test the hypothesis \textit{H\textsubscript{0}}, first, we perform the matching operations in R~\cite{ihaka1996r} to get matched sets. Then, using the matched sets of data, we calculate the test statistic $\Lambda(\mathbf{a})_{max}$ and $\Lambda(\mathbf{a})_{min}$ using the 1) optimization model with an Integer Programming solver and 2) with the proposed algorithms. The integer programming model is implemented in AMPL~\cite{fourer1987ampl} and solved with the commercial solver CPLEX~\cite{cplex2009v12}. We implement the Algorithm~\ref{alg:1}, \ref{alg:2}, and~\ref{alg:3} in R~\cite{ihaka1996r}. All the experiments are performed in a Dell Precision workstation with 64 GB RAM, Intel(R) Xeon(R) CPU E5-2670 v3 processor running at 2.30 GHz. 

Table~\ref{table2} shows the comparison of solutions obtained using an optimization model with CPLEX iterating over different values of discordant pairs ($m$) and proposed algorithm at robustness condition. The range of p-value achievable corresponding to the test statistics $\Lambda(\mathbf{a})_{max}$ and $\Lambda(\mathbf{a})_{min}$ is presented in Fig~\ref{fig5}. The proposed Algorithm~\ref{alg:1} directly identifies the robustness condition which is $B_{max} = 12082$ and $C_{max} = 9448$ and calculates the test statistics $\Lambda(\mathbf{a})_{robust}$. The computation time required by the proposed algorithms are very insignificant compared to the time required by the optimization model. A major implication of the robust McNemar's test is that if the same experiment is conducted with as many as 19,000 discordant pairs, we can achieve any p-value between 0 to 0.23 (see Fig~\ref{fig5}); some experimenter might rejected the hypothesis and some might fail to reject the null hypothesis. Both the experimenters, in this case, are right but the conclusion differed due to the fact that they choose different pairs. Any policy decision made using the matching method without considering this uncertainty has a possibility to fail. 

\begin{table}[!ht]
\begin{adjustwidth}{-0.0in}{0in} 
\centering
\caption{
{\bf Test statistic $\Lambda(\mathbf{a})$ calculated using optimization model and algorithm~\ref{alg:1}.}}
\begin{tabular}{l l l l l l}
\thickhline
 & \multicolumn{3}{l}{\bf Optimization Model} & \multicolumn{2}{l}{\bf Algorithm~\ref{alg:1}}\\ \thickhline
$m$ & $\Lambda(\mathbf{a})_{min}$ & $\Lambda(\mathbf{a})_{max}$ & CPU time & Robustness Condition& CPU time\\ \thickhline
$50$ & -7.21 & 6.93 & 918.69 &  & \\ 
$100$ & -10.10 & 9.90 & 982.23 &  & \\ 
$300$ & -17.38 & 17.26 & 1203.68 &  & \\ 
$500$ & -22.41 & 22.32 & 2037.37 &  & \\
$800$ & -28.32 & 28.25 & 2204.52 &  & \\
$1000$ & -31.65 & 61.60 & 2218.27 &  & \\
$5000$ & -70.72 & 70.69 & 2563.32 &  &  \\
$10000$ & -88.97 & 99.99 & 2934.47 &   &  \\
$15000$ & -31.82 & 74.82 & 2386.53 &  & \\
$20000$ & 7.80 & 21.83 & 2659.94 &  & \\
$21000$ & 14.51 & 21.83 & 2640.60 &  & \\
$21500$ & 17.75 & 18.16 & 2219.23 &  & \\
$21530$ & 17.94 & 17.94 & 3246.64 & \quad  $17.94^*$  & $3^*$ \\\thickhline
\end{tabular}
\begin{flushleft} The optimization model is solved iteratively over different values of discordant pairs ($m$) until a robust solution is reached. $^*$Algorithm~\ref{alg:1} identifies the robustness condition ($B_{max} = 12082$ and $C_{max} = 9448$) and calculates the test statistic for that condition only. CPU times are presented in seconds: time required to solve both minimization and maximization problem. 
\end{flushleft}
\label{table2}
\end{adjustwidth}
\end{table}

\begin{figure}[H]
\caption{{\bf The range of p-value achievable for different number of discordant pairs $m$.}
The p-values are calculated using the test statistics presented in Table~\ref{table2}. The red line represents minimum possible p-value (corresponding to $\Lambda(\mathbf{a})_{max}$) and the blue line represents the maximum possible p-value (corresponding to $\Lambda(\mathbf{a})_{min}$).}
    \includegraphics[width=1\textwidth]{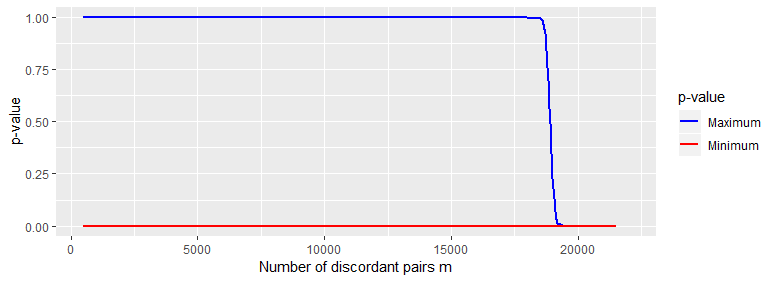}
\label{fig5}
\end{figure}
In regards to the hypothesis we made at the beginning of this section, we can make a conclusion on the hypothesis by using the p-value calculated at the robustness condition or the result from Table~\ref{table2} and corresponding p-values from Fig~\ref{fig5}. We can see that, both the maximum and minimum p-value $< \alpha$ when we match more than 20,000 discordant pairs. Therefore, we can reject the null hypothesis of no causal effect and conclude that HRRP is a cause for increase in the non-index readmissions. This result points out that not only the readmission rate but also the non-index readmission rate should be considered as a measure of health care quality and health care policy makers should take another look at the implication of HRRP policy.

\section*{Conclusion}
Any policy decision or evaluation requires identifying the causal relation between policy alternatives and potential outcomes. Matching methods have become very popular in identifying such causal relations. However, in one-to-one matching, when we have multiple pair assignment options, matching method is vulnerable to uncertainty as the pair construction process does not consider outcomes. In this paper, we consider the integer programming model for robust causal inference testing approach with binary outcomes proposed by Morucci \textit{et al.}~\cite{morucci2018hypothesis} and develop scalable algorithms that can be used for large-scale observational studies. We identify a robustness condition which combines the maximization and minimization problem proposed in~\cite{morucci2018hypothesis}. Instead of solving two computationally expensive integer programming models iteratively by increasing the number of discordant pairs until a robust estimate is achieved, we convert the problems into counting problems through a series of propositions. In addition, the proposed Algorithm~\ref{alg:1} only solves one problem instead of two separate problems and it is computationally efficient. Numerical experiment conducted on State of California Patient Discharge Database shows that the proposed algorithms are highly scalable. The numerical experiment shows an interesting result on a well appreciated health care policy\textemdash HRRP\textemdash proposed under PPACA in 2012 to improve health care quality. We identify that HRRP is a cause to the increase of non-index readmission which is associated to higher in-hospital mortality rate and longer length of stay. Though the numerical experiment is performed with around 100,000 samples, the algorithms proposed in this paper can handle observational studies with millions of samples efficiently without further modification. In the future, we plan to develop similar robust causal inference testing algorithms with continuous outcomes for large-scale observational studies. 



\section*{Acknowledgments}
We thank Mr. Tasnim Ibn Faiz, PhD candidate, MIE, Northeastern University for his help and advice on programming in AMPL. We also thank Mr. Md Mahmudul Hasan, PhD candidate, MIE, Northeastern University for sharing his knowledge on the OSHPD database and supporting us throughout the data analysis process.

\bibliographystyle{unsrt}
\bibliography{robust}

\end{document}